\let\theoremstyle\relax
\DeclarePairedDelimiter{\ceil}{\lceil}{\rceil}
\theoremstyle{definition}
\newtheorem{problem}{Problem}
\newtheorem{proposition}{Proposition}[section]
\newtheorem{definition}{Definition}[section]
\newtheorem{lemma}{Lemma}[section]
\newcounter{rem} 
\begin{document}

\begin{frontmatter}
\title{Persistent Surveillance With Energy-Constrained UAVs and Mobile Charging Stations}
% Title, preferably not more than 10 words.

%\thanks[footnoteinfo]{}
\author[First]{Sepehr Seyedi}, 
\author[Second]{Yasin Yaz{\i}c{\i}o\u{g}lu},  and 
\author[First]{Derya Aksaray}

\address[First]{Department of Aerospace Engineering and Mechanics, University of Minnesota, Minneapolis, Minnesota 55455.  (Email: seyed011@umn.edu,  daksaray@umn.edu)}

\address[Second]{Department of Electrical and Computer Engineering, University of Minnesota, Minneapolis, Minnesota 55455. (Email: ayasin@umn.edu)}

\begin{abstract}                
% Abstract of not more than 250 words.
\;  We address the problem of achieving persistent surveillance over an environment by using energy-constrained unmanned aerial vehicles (UAVs), which are supported by unmanned ground vehicles (UGVs) serving as mobile charging stations. Specifically, we plan the trajectories of all vehicles and the charging schedule of UAVs to minimize the long-term maximum age, where age is defined as the time between two consecutive visits to regions of interest in a partitioned environment. We introduce a scalable planning strategy based on 1) creating UAV-UGV teams, 2) decomposing the environment into optimal partitions that can be covered by any of the teams in a single fuel cycle, 3) uniformly distributing the teams over a cyclic path traversing those partitions, and 4) having the UAVs in each team cover their current partition and be transported to the next partition while being recharged by the UGV. We show some results related to the safety and performance of the proposed strategy. %results that the proposed strategy 1) always guarantees the safety of UAVs (they never run out of fuel during the mission) and 2) results in a long-term maximum age that is a function of the length of the cyclic path and the number of teams. The performance of the proposed strategy is also demonstrated via simulations.
%; 2) performance always land on charging stations run out of fuel under the proposed strategy always guarantees the UAVs to safely land on a charging station at any time during the mission and ) a performance result, that is the long-term maximum age, under the proposed strategy. The performance of the proposed strategy is also demonstrated via simulations.
\end{abstract}

\begin{keyword}
Planning, autonomous vehicles, persistent surveillance
\end{keyword}

\end{frontmatter}

\section{Introduction}
\vspace{-4mm}

In persistent surveillance missions, the goal is to continuously and repetitively obtain information about the activities or resources in a particular area. When surveillance is performed by a team of vehicles (agents), coverage control methods can be used to have the agents spread out and monitor the entire area (e.g., \cite{Cortes04,Pimenta08,Yazicioglu17TCNS}). However, if the environment is large compared to the sensing range of agents or the number of vehicles are insufficient, a full coverage may be infeasible. In such cases, a prominent approach is to partition the surveillance area into regions and try to minimize the maximum age, i.e.,  the time between consecutive visits to any of the regions. In the literature, various strategies were proposed to find agent trajectories that optimize such performance measures. Auction algorithms were used in \cite{nigam2008} to achieve region assignment among the agents to minimize the maximum age in the environment. A minimal cyclic path was found in \citep{elmaliach2009}, and the robots were located on the path to obtain uniform frequency of visiting the viewpoints. Trajectories that minimize the sum of ages while satisfying the individual temporal logic constraints of agents were computed in \citep{aksaray2015}.    

Some studies also considered persistent surveillance with energy-constrained agents, which return to a stationary base for recharging. An offline path-planning algorithm for maintaining a safe distance between energy-constrained UAVs and stationary base stations was developed in \citep{scherer2016}. A distributed energy-aware control policy for networked systems was proposed in \citep{derenick2011}. Temporal logic constraints were employed in \citep{aksaray2016, aksaray2015} to ensure that energy-constrained UAVs visit charging stations prior to losing power.

There is limited literature on persistent surveillance with energy-constrained unmanned aerial vehicles (UAVs) and mobile charging stations (e.g., unmanned ground vehicles (UGVs) equipped with charging stations). Mobile stations mainly enable the UAVs to be recharged as they are transported to another location. Without such a capability, persistent monitoring of a large environment requires placing multiple stationary charging stations such that every region in the environment is reachable from a charging station given the energy limitations of UAVs. On the other hand, environments of any size can be persistently monitored with mobile charging capability, even with one UAV and one UGV. The heuristics solvers for the Traveling Salesman Problem (TSP) were used to optimize the trajectories of multiple mobile charging stations servicing a team of UAVs with predetermined trajectories in \citep{smith2015}. A frugal feeding heuristic was used to optimize the trajectory of a single mobile charging station servicing multiple mobile robots with respect to locomotion costs in \citep{litus2009}. The problem of visiting a predetermined set of sites in the least amount of time was framed as a generalized TSP for multiple stationary and/or mobile charging stations in \citep{yu2018}. Different from the earlier works, we address the problem of simultaneously planning the trajectories of multiple UAVs and multiple UGVs for persistence surveillance missions. We present a scalable strategy based on optimally partitioning the environment and having uniform teams of a single UGV and multiple UAVs that patrol over a cyclic route of the partitions.

\section{Problem Statement}

\subsection{Preliminaries and Notation}
The set of real numbers is represented by ${\mathbb{R}}$, with the subset ${\mathbb{R}_+\subset \mathbb{R}}$ representing the set of non-negative real numbers. The set of natural numbers is ${\mathbb{N}}$. The vector of zeros with the is denoted by $\bold{0}$. For a real number ${r\in \mathbb{R}}$,  ${\lceil r \rceil}$ represents the ceiling of $r$. Similarly, $\lfloor r \rfloor$  represents the floor of $r$.
An undirected graph $\mathcal{G}(V,E)$ consists of a set of nodes $V$ connected by a set of edges $E \subseteq V \times V$. A Hamiltonian cycle over a graph $\mathcal{G}(V,E)$ is a cyclic path that visits each node $v\in V$ exactly once. 

\subsection{Problem Formulation}
\textit{Environment -}
We consider a persistent surveillance scenario in a 3D convex obstacle-free volume
\begin{equation}
\resizebox{.93\hsize}{!}{$\mathcal{Q}\coloneqq \{(x,y,z) \in \mathbb{R}^3 \vert 0 \le x \le x_{max}, 0 \le y \le y_{max}, z \ge 0\}$,}
\end{equation} 
where $x_{max},y_{max}\in \mathbb{R}_+$. We assume that ${n \in \mathbb{N}}$ homogeneous UAVs with limited energy move over $\mathcal{Q}$ and are supposed to patrol the area defined as
\begin{equation}
\resizebox{.93\hsize}{!}{$\mathcal{Q}_{\bar{z}} \coloneqq \{(x,y,z) \in \mathbb{R}^3 \vert 0 \le x \le x_{max}, 0 \le y \le y_{max}, z = \bar{z}\}$},
\end{equation}
in coordination with ${m\in \mathbb{N}}$ homogenous UGVs (i.e., mobile charging stations). Here, $\bar{z}\in \mathbb{R}_+$ can be determined based on the desired image resolution if it is an aerial imaging scenario.
% The patrolling task is formulated as a coverage problem for the UAVs over an area 
% \begin{equation}
% \mathcal{Q}_{\bar{z}} \coloneqq \{(x,y,z) \in \mathbb{R}^3 \vert x \le \bar{x}, 0 \le y \le \bar{y}, z = \bar{z}\}
% \end{equation}

Let each UAV have a square detection footprint with a dimension of ${d \in \mathbb{R}_+}$ when flying at an altitude of ${\bar{z}}$. We assume that the area defined by $\mathcal{Q}_{\bar{z}}$ can be discretized into a Cartesian grid where each grid cell is a square of dimension $d$. Accordingly, the grid will have ${\bar{x} \times \bar{y}}$ cells, where ${\bar{x} = \frac{x_{max}}{d}}$, and ${\bar{y} = \frac{y_{max}}{d}}$. The geometric centers of the cells collectively comprise a set of points ${V \coloneqq \{v_1,...,v_{|V|}\}}$, where each ${v_i \in \mathbb{R}^3}$ needs to be persistently visited by UAVs. These points in $V$ will be referred to nodes in the rest of the document.

\textit{Dynamics -}
At any time $t$, let the position of UAV $i \in \{1,\dots,n\}$ be denoted by ${p_i^A(t) = [x_i^A(t),y_i^A(t),z_i^A(t)] \in \mathcal{Q}}$ and the position of UGV $i \in \{1,\dots,m\}$ be denoted by \linebreak  ${p_i^G(t) =[x_i^G(t),y_i^G(t),0] \in \mathcal{Q}}$. Moreover, the positions \linebreak of all UAVs and all UGVs at time $t$ will be \linebreak  expressed by ${p^A(t) = [p_1^A(t),...,p_n^A(t)]^{\intercal}} \in \mathbb{R}^{n\times3}$ and ${p^G(t) = [p^G_1(t),...,p^G_m(t)]^{\intercal}} \in \mathbb{R}^{m\times3}$, respectively.%and let $p^G(t)$ be the array of all UGV positions   at time $t$. 
We will use the convention that ${\mathbf{p}^A_i}$ will represent the full trajectory of UAV $i$, $\mathbf{p}^G_i$ will represent the full trajectory of UGV $i$, and the corresponding sets of UAV and UGV trajectories will be denoted by $\mathbf{p}^A$ and $\mathbf{p}^G$, respectively.

The system of vehicles is subject to single integrator dynamics, with UAVs and UGVs having maximum speeds of ${u_{max}^A \in \mathbb{R}}$ and ${u_{max}^G \in \mathbb{R}}$, respectively. Accordingly, the dynamics of the vehicles can be written as
\begin{equation}\label{UAVdynamics}
    \dot{p}_i^A(t) = u_i^A(t), \qquad i=1,...,n ,
\end{equation}
\begin{equation}\label{UGVdynamics}
    \dot{p}_i^G(t) = u_i^G(t), \qquad i=1,...,m ,
\end{equation}
where ${u_i^A(t) \in \mathbb{R}^3}$ and ${u_i^G(t) \in \mathbb{R}^3}$ are the control inputs for UAV $i$ and UGV $i$ at time $t$, respectively. Moreover, these control inputs have the following constraints: 
\begin{equation}\label{Acon}
    \|u_i^A(t)\|_2 \le u_{max}^A  \; \text{ and } \; \|u_i^G(t)\|_2 \le u_{max}^G  \quad \forall i,
\end{equation}
depicting a maximum speed for both vehicle types;
\begin{equation}\label{Gcon2}
    proj_z\{u_i^G(t)\} = 0,
\end{equation}
indicating that the UGV is moving only on the $x-y$ plane; 
\begin{equation}\label{Acon2}
    u_i^A(t) = [0 \ 0 \ -u_{max}^A]  \; \forall i \; \text{ s.t. } \; z_i^A(t) \ge u_{max}^A\frac{e_i(t)}{\beta^-},
\end{equation}
enforcing that the UAVs do not to run out of energy during flight by imposing emergency landing when they reach a critical energy;
\begin{equation}
\resizebox{.88\hsize}{!}{$\begin{split}\label{Acon3}
    u^A_i(t) \in \{\mathbf{0}\} \cup\left\{ u^G_j(t) \ \middle| \  p^G_j(t)=p^A_i(t)\right\},
    \forall i \text{ s.t } e_i(t)=0, 
    \end{split}$}
\end{equation}
enforcing that the UAVs are either stationary or transported by a UGV when their energy is depleted;
\begin{equation}\label{Acon4}
\resizebox{.9\hsize}{!}{$\begin{split}
    u^A_i(t) \in & \{\mathbf{0}\}\cup \left\{ u \in \mathbb{R}^3\ \middle| \ proj_z(u)>0\right\} \cup \\ 
    & \left\{ u^G_j(t) \ \middle| \  p^G_j(t)=p^A_i(t)\right\} \; \forall i \; \text{ s.t. } z^A_i(t)=0,
    \end{split}$}
\end{equation}

enforcing that the UAVs on the ground can be stationary, take off (i.e., $proj_z(u)>0$), or be transported by a UGV. 
The set of control inputs to the UAVs and UGVs at time $t$ are denoted by ${u^A(t)=[u_1^A(t),...,u_n^A(t)]^{\intercal}} \in \mathbb{R}^{n\times 3}$ and ${u^G(t)= [u_1^G(t),...,u_m^G(t)]^{\intercal}}\in \mathbb{R}^{n\times 3}$, respectively. The full history of inputs to UAV $i$ and UGV $i$ will be denoted as $\mathbf{u}_i^A$ and $\mathbf{u}_i^G$, respectively. Similarly, the sets of all $\mathbf{u}_i^A$ and $\mathbf{u}_i^G$ will be denoted as $\mathbf{u}^A$ and $\mathbf{u}^G$, respectively.

Each UAV $i$ is assumed to have limited energy, ${e_i(t) \in [0,\bar{e}]}$, where ${\bar{e}\in \mathbb{R}}$ is the maximum energy capacity of all UAVs. Energy limitations of UGVs are neglected, and it is assumed that any UGV can simultaneously charge any number of UAVs. 
% Each UAV has a binary state $\alpha_i \in \{0,1 \}$, which denotes whether it is active ($\alpha_i=1$: flying) or inactive ($\alpha_i=0$: charging)
% \begin{equation}
%     \label{act}
%     \alpha_i(t)= \begin{cases}
%  0 & \quad \text{if charging}   \\
% 1  & \quad \text{otherwise}.
% \end{cases}
% \end{equation}
% in  which is equivalent to the maximum number of consecutive hops that a fully-charged UAV can take on $\mathcal{G}$,
The energy of UAV $i$, i.e., $e_i(t)$,  decreases at a constant rate ${\beta^- \in \mathbb{R}_+}$ when active, and increases at a constant rate ${\beta^+ \in \mathbb{R}_+}$ when charging,
$$
e_i(0)=\bar{e},
$$
\begin{equation} \label{energy}
\dot{e}_i(t) =
\begin{cases}
  \beta^+  & \  \text{if } \exists j: p_i^A(t)=p_j^G(t), \text{and } e_i(t) < \bar{e},\\
  -\beta^-  & \  \text{if } z^A_i(t) > 0,\\
  0 & \ \text{otherwise.}
\end{cases}
\end{equation}
% During the inactive phase, a UAV can only move if it is being charged and transported by a UGV, i.e., 
% \begin{multline}
% \label{inacmove}
%     \alpha_i(t)=0, \; p_i(t+\Delta t)\neq p_i(t) \Rightarrow \exists j: p_i(t+\Delta t)= q_j(t+\Delta t), \\  p_i(t)= q_j(t).
% \end{multline}

% For simplicity, it will be assumed that the UAVs and UGVs can travel to any adjacent node in $\mathcal{G}$ in one and $u \in \mathbb{N}$ time steps, respectively, i.e.,

% $$
% d(p_i(t+1), p_i(t)) \leq 1,  
% $$
% $$
% d(q_i(t+1), q_i(t)) \leq 1,
% $$
% \begin{equation}
% \label{mots}
% q_i(t+1)\neq q_i(t) \Rightarrow q_i(t)=  q_i(t-1) = \hdots =q_i(t-u-1).
% \end{equation}

\textit{Objective -} We define the objective of the problem based on a metric called age. The age of a node is defined as the difference between the current time and the last time there was a UAV present at that node. If the node is occupied by a UAV, the node's age is zero. Further, we assume that the ages of all nodes are initialized to zero at time $t=0$.
For any set of UAVs, the initial conditions $p^A(0)$, the control inputs $u^A(t)$ and $u^G(t)$ result in a set of UAV trajectories that lead to a set of times $\mathcal{A}(v,\mathbf{p}^A)$, which comprises the times when a UAV arrives at a node ${v\in V}$ that was previously unoccupied. Similarly, the UAV trajectories lead to a set of times $\mathcal{D}(v,\mathbf{p}^A)$, which comprises the times when a UAV departs from a node ${v\in V}$ that becomes unoccupied. For some non-zero ${\varepsilon \in \mathbb{R}_+}$, the arrival and departure time sets can be written as
\small
\begin{equation} \label{arrive}
    \mathcal{A}(v,\mathbf{p}^A) \coloneqq  \Bigg\{t\in (0,\infty) \ \biggm| \begin{array}{l} \ \exists i: p_i^A(t) = v \\ \forall i \  p_i^A(\tau) \neq v , \ \tau \in (t-\varepsilon,t)  \end{array} \Bigg\},
\end{equation}
\begin{equation}\label{depart}
    \mathcal{D}(v,\mathbf{p}^A) \coloneqq  \Bigg\{t\in (0,\infty) \ \biggm| \begin{array}{l} \ \exists i: p_i^A(t) = v \\ 
    \forall i \ p_i^A(\tau) \neq v, \tau\in(t,t+\varepsilon)  \end{array} \Bigg\},
\end{equation}
\normalsize
The maximum age over all nodes ${v \in V}$ will depend on the arrival and departure times of the UAVs as defined in \eqref{arrive} and \eqref{depart}. Let ${t_d \in \mathcal{D}(v,\mathbf{p}^A)}$ denote a departure time, and ${t_{\alpha}(v,t_d) \in \mathcal{A}(v,\mathbf{p}^A)}$ denote the arrival time to node $v$ that minimizes the difference ${(t_{\alpha}-t_d)}$. Then the maximum age over the time interval $(t,\infty)$ can be defined as follows,
\begin{equation}\label{age}
    \bar{T}(t) = \underset{v\in V, t_d \ge t}{\text{max}}\{t_{\alpha}(v,t_d)-t_d: v\in V, t_d \in \mathcal{D}(v,\mathbf{p}^A) \},
\end{equation}
\begin{equation}
    t_{\alpha}(v,t_d) = \text{min}\{ \tau \in \mathcal{A}(v,\mathbf{p}^A): \tau \ge t_d\}. \nonumber
\end{equation}

% Denote the age of a point $v \in V$ at time $t$ by $T_v(t)$,
% \begin{equation}\label{age}
%     T_v(t) = t - \alpha_t
% \end{equation}
% where $\alpha_t = \text{max}\{\alpha\in\mathcal{A}(v) \ \vert \ \alpha \le t\}$.

% For each node $v \in V$, let $T_v(t)$ represent the age of the node $v$ at time $t$, which is defined as the amount of time since the last visit to cell $v$ by an active UAV. Accordingly,
% \begin{equation}
%   \label{age}
%   T_v(t+\Delta t)= \begin{cases}
%  0 &  \text{if } \exists i: \alpha_i(t+\Delta t) p_i(t+\Delta t)=v,  \\
% T_v(t)+\Delta t&  \text{otherwise,}

% \end{cases}
% \end{equation}

\begin{problem} (Age Minimization)
For a given environment $\mathcal{Q}$ and a team of $n$ UAVs and $m$ UGVs with the energy and mobility constraints of \eqref{Acon} through \eqref{energy}, the time evolution of the node ages depend on the choices of ${\mathbf{u}^A,\mathbf{u}^G, p^A(0)}$ and $p^G(0)$. 
% which is the maximum of $T_v(t)$ over all $v\in V$ in the long run, i.e.,
% \begin{equation}
% \bar{T}(p,q,\alpha) := \underset{t\to \infty}{\text{lim sup}}  \ \left( \underset{v\in V}{\text{max}} (T_v(t))\right).
% \end{equation}
Find the trajectories of all vehicles (UAVs and UGVs) and the charging schedule of UAVs to minimize the maximum age, subject to the energy and mobility constraints of the vehicles, i.e.,
\begin{align}\label{eq:problem}
    \min_{p^A(0),p^G(0),\mathbf{u}^A,\mathbf{u}^G} & \quad \underset{t\to \infty}{\text{lim sup}} \left(\bar{T}(t)\right),\\
    s.t. & \quad \eqref{UAVdynamics}, \eqref{UGVdynamics},  \eqref{Acon},\eqref{Gcon2},
    \eqref{Acon2},\eqref{Acon3},\eqref{Acon4}, \eqref{energy}.\nonumber
\end{align}
\end{problem}

Such patrolling problems are typically NP-hard similar to the TSP, which requires to find shortest closed tours visiting all the nodes (e.g., \cite{pasqualetti2010,yu2018,laporte1992}). Exact solutions usually require searching among all the possibilities, which is not scalable. This has motivated the development of approximation algorithms that produce acceptably good solutions.

\section{Main Results}

We propose a scalable planning strategy for simulatenously finding the trajectories of UAVs and UGVs as follows: 1) creating teams that involve a single UGV and equal number of UAVs, 2) decomposing the environment into the largest rectangular partitions that can be covered by any of the teams in a single fuel cycle, 3) uniformly distributing the teams over a cyclic path traversing those partitions, and 4) having the UAVs in each team cover their current partition and be transported to the next partition on the cycle while being recharged by the UGV in their team.
\subsubsection{Partitioning}
The area $\mathcal{Q}_{\bar{z}}$ is partitioned using identical rectangular partitions containing ${a_1 \le \bar{x}}$ cells along the x-direction and ${a_2 \le \bar{y}}$ cells along the y-direction. Given an environment graph $\mathcal{G}=(V,E)$, we consider a partition set $\mathcal{P}$ which is comprised of rectangular partitions  $\mathcal{P}_i$ ($i=1,...,|\mathcal{P}|$) where each $\mathcal{P}_i$ induces a subgraph $ \mathcal{G}[\mathcal{P}_i] = (V_i,E_i)$ such that $\cup_i V_i = V$.
\begin{algorithm}\small
	\caption{Partitioning}\label{alg_4}
	\textbf{INPUT: } $\bar{x},\bar{y},a_1,a_2 $ \;
	\textbf{OUTPUT: } $\mathcal{P}$
	\begin{algorithmic}[1]
	\State $\mathcal{P} \gets \emptyset$, $i \gets 0$
	\For {$k_1=1:\ceil{\bar{x}/a_1}-1$}
	\For {$k_2=1:\ceil{\bar{y}/a_2}-1$}
	\State $i \gets i+1$
	\State  $\mathcal{P}_i \coloneqq  \left\{v\in V \ \middle| \ \begin{array}{l} (k_1-1)a_1 < x < k_1a_1 \\ 
        (k_2-1)a_2 < y < k_2a_2  \end{array}\right\}$
    
% 	\State $c_i \gets$ centroid of $\mathcal{P}_i$
	\EndFor
	\EndFor
	\For {$k=1:\ceil{\bar{y}/a_2}-1$}
	\State $i \gets i+1$
	\State $\mathcal{P}_i \coloneqq \left\{v(x,y)\in V \ \middle| \ \begin{array}{l} \bar{x}-a_1 < x < \bar{x} \\ (k-1)a_2 < y < ka_2 \end{array} \right\}$
% 	\State $c_i \gets$ centroid of $\mathcal{P}_i$
	\EndFor
	
	\For {$k=1:\ceil{\bar{x}/a_1}-1$}
	\State $i \gets i+1$
	\State $\mathcal{P}_i \coloneqq \left\{v(x,y)\in V \ \middle| \ \begin{array}{l} (k-1)a_1 < x < ka_1 \\ \bar{y}-a_2 < y < \bar{y} \end{array} \right\}$
% 	\State $c_i \gets$ centroid of $\mathcal{P}_i$
	\EndFor
	\State $i \gets i+1$
	\State $\mathcal{P}_i \coloneqq \left\{v(x,y)\in V \ \middle| \ \begin{array}{l} \bar{x}-a_1 < x < \bar{x} \\ \bar{y}-a_2 < y < \bar{y} \end{array} \right\}$
% 	\State $c_i \gets$ centroid of $\mathcal{P}_i$
\end{algorithmic}
\end{algorithm}
In Alg. \ref{alg_4}, there are four distinct sections. An example partition set is depicted in Fig. \ref{fig:partitioning}, where Fig. \ref{fig:partitioning}(a) corresponds to the lines $3-7$ of Alg. \ref{alg_4}, Fig. \ref{fig:partitioning}(b) to the lines $8-11$, Fig. \ref{fig:partitioning}(c) to the lines $12-15$, and Fig. \ref{fig:partitioning}(d) to the lines $16-17$.

\begin{figure}[h!]
    \centering
	\includegraphics[scale=0.42]{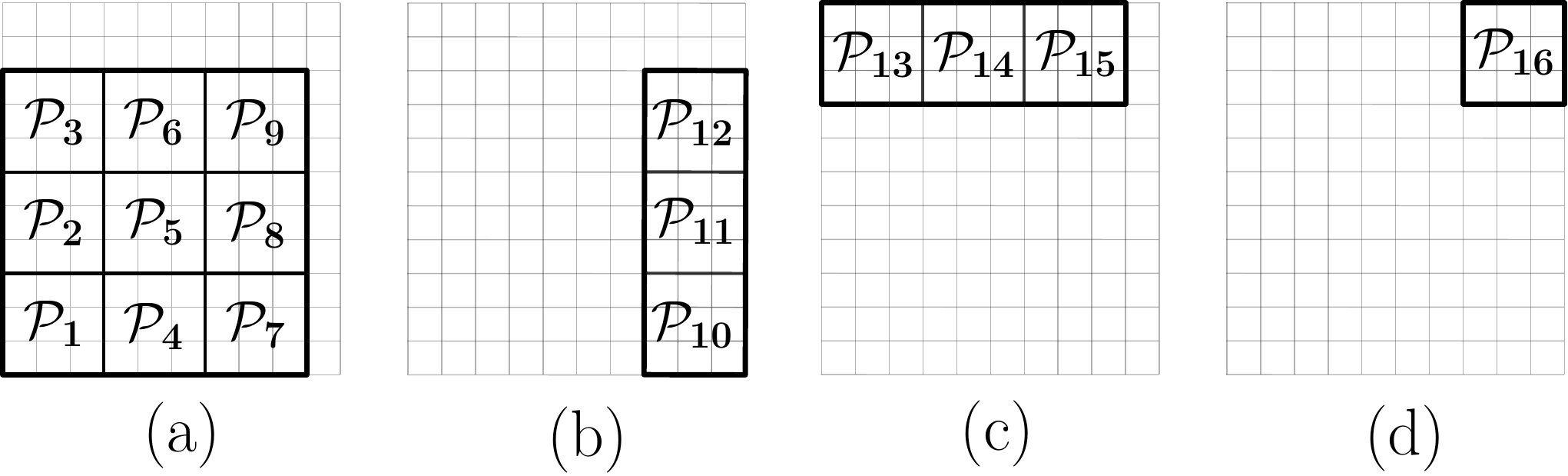}
	\caption{\small Illustration of the partitions generated via Alg. \ref{alg_4} for the case of $a_1=a_2=3$, $\bar{x}=10$, and $\bar{y}=11$.}
	\label{fig:partitioning}
\end{figure}

\subsubsection{Supercycles}

For simplicity, we assume that the number of UAVs $n$ is divisible by the number of UGVs $m$. Hence, $m$ teams are formed, each with $1$ UGV and $\frac{n}{m}$ UAVs.

% For each fuel cycle, a team of UGV $j$ and $\bar{n}$ UAVs is assigned to a partition $\mathcal{P}_i$. In one fuel cycle, the $\bar{n}$ UAVs are released from the release point at the center of the partition, the UAVs patrol all nodes in $\mathcal{P}_i$, UAVs return to the release point, and then during charging UGV $j$ moves the UAVs assigned to it from its current partition $\mathcal{P}_i$ to another partition $\mathcal{P}_h$. Extending this procedure until all partitions are covered, we consider a supercycle of fuel cycles over the mission space as in Fig. \ref{fig:supcycle}. Specifically, we consider a Hamiltonian super-cycle of fuel cycles over equivalent rectangular partitions of the environment, allowing for partitions to overlap if needed.

\begin{definition}\label{def:supercycle}
(Supercycle): For a given UAV-UGV team with given energy capacity, a supercycle of a partition set $\mathcal{P}$ of the area $\mathcal{Q}_{\bar{z}}$ is a sequence of UAV-UGV patrolling-charging cycles that patrols all partitions in $\mathcal{P}$ once.
\end{definition}
Definition \ref{def:supercycle} states the following: Consider a sequence of partitions $(\mathcal{P}_0, \dots, \mathcal{P}_f, \mathcal{P}_0)$ that starts and ends with partition $\mathcal{P}_0$ and contains all the partitions in $\mathcal{P}$ only once. Assume that the team starts at $\mathcal{P}_0$. As the UGV occupies the center of $\mathcal{P}_0$, all UAVs are released and visit all nodes in $\mathcal{P}_0$, and then return back to the release point. Once all UAVs have returned to the release point, the UGV carries all UAVs to the release point of the next partition while simultaneously charging the UAVs. If needed, the UGV will wait at the next release point until all UAVs have reached full energy. After all UAVs have returned to the release point at the final partition $\mathcal{P}_f$, the UGV moves the team back to the release point of the first partition $\mathcal{P}_0$, and the same cycle repeats. Such a cycle is called a supercycle and its period is denoted by $T_c$.

%\begin{enumerate}
%    \item As the UGV occupies the center of partition $\mathcal{P}_i$, all UAVs are released and visit all nodes in the partition, returning back to the release point. 
%    \item Once all UAVs have returned to the release point, the UGV carries all UAVs to the release point of the next partition while simultaneously charging the UAVs back to their full energy capacity.
%    \item The UGV then waits at the next release point until all UAVs have reached full energy, if needed.
%    \item After all UAVs have returned to the release point at the final partition in the supercycle, the UGV moves the team back to the first partition release point, and the cycle repeats.
%    \item The period for a supercycle is denoted $T_c$.
%\end{enumerate} 
% The process of this patrolling-charging cycle being carried out repeatedly over a Hamiltonian cycle of the partitions is referred to as a supercycle.

\begin{figure}[h!]
    \centering
	\includegraphics[scale=0.6]{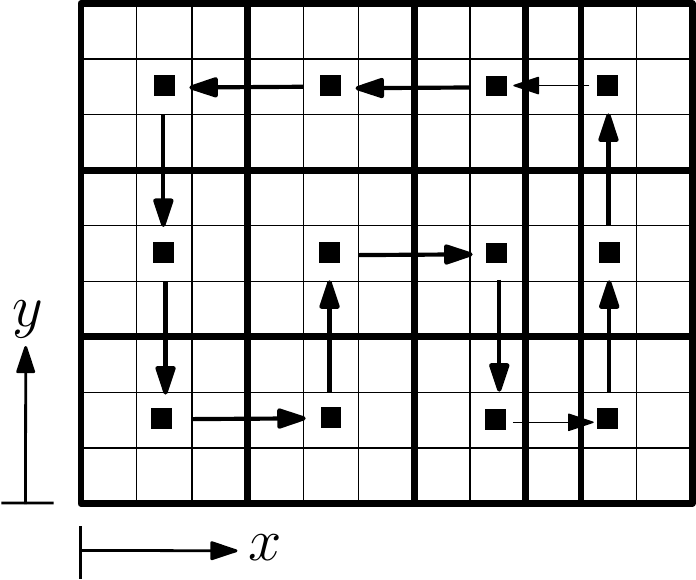}
	\caption{\small{Example of a supercycle for a square partition of size $3\times 3$ for an evironment of size $12X9$.}} %Note that the x- and y-coordinates are normalized to the grid cell dimensions in the x- and y-directions, respectively.}}
	\label{fig:supcycle}
\end{figure}
For a given supercycle, the UGV follows a Hamiltonian cycle that passes through the centers of each partition once. 
The UGV trajectories that define the supercycle are determined using the Dantzig-Fulkerson-Johnson (DFJ) algorithm \citep{dantzig1954solution}. The DFJ algorithm returns a sequence $\{\mathcal{P}_i\}_{i=1:|\mathcal{P}|}$ of partitions of the partition set $\mathcal{P}$ sorted according to the minimum-distance Hamiltonian cycle through the partitions.
Let $c_i \in \mathbb{R}^3$ be the center of partition $\mathcal{P}_i$, and let the index $i$ correspond to the order of the partition in the supercycle. Then the UGV must travel a distance of ${\|c_{i+1}-c_{i}\|}$ to carry its assigned UAVs from partition $\mathcal{P}_i$ to partition $\mathcal{P}_{i+1}$. In the first cycle, suppose the UAV-UGV team releases UAVs at partition $\mathcal{P}_i$ at time $t_i$ assuming the supercycle begins at time $t=0$. Then the velocity of the UGV when it is carrying the UAVs from $c_i$ to $c_{i+1}$ will be,
\begin{equation}\label{ugv_traj} 
    \small u^G(t) = 
        \begin{cases}
            {u_{max}^G}\frac{c_{i+1}-c_i}{\|c_{i+1}-c_i\|_2} & \text{if $t \in \bigcup_{i} \Big(t^{out}_{i,j},t^{in}_{i+1,j}\Big)$} \\
            0 & \text{otherwise}
        \end{cases}
\end{equation}
where $t^{out}_{i,j}$ is the time when the UGV starts to move from $\mathcal{P}_i$ to $\mathcal{P}_{i+1}$ in the j$^{th}$ supercycle, $t^{in}_{i+1}$ is the time when the UGV arrives the center of $\mathcal{P}_{i+1}$ in the j$^{th}$ supercycle, and they can be expressed as 
\begin{align}
t^{out}_{i,j} &= t_i + jT_c + \frac{\Delta e}{\beta^-} \\
t^{in}_{i+1,j} &= t_i+jT_c +\frac{\Delta e}{\beta^-}+ \frac{\|c_{i+1}-c_i\|_2}{u_{max}^G}
\end{align}
where $\frac{\Delta e}{\beta^-}$ is the required time for the UAVs to cover a partition from release to the return of the last UAV to the release point, $T_c$ is the period of the supercycle, and $j$ is the repetition index of the supercycle.   
%\small
%\begin{equation}
%    \forall t \in \bigcup_{j=0}^{\infty}\left(t_i+jT_c+\frac{\Delta e}{\beta^-} \ \scalebox{2}{,} \ t_i+jT_c +\frac{\Delta e}{\beta^-}+ \frac{\|c_{i+1}-c_i\|_2}{u_{max}^G}\right) ,
%\end{equation}
%\normalsize
%and will be zero otherwise. As this will happen during every supercycle in a cyclic manner, the time interval is the union of infinitely many intervals separated by $T_c$. As the teams follow a supercycle, $c_{|\mathcal{P}|+1} = c_1$.
% The amount of time required from the arrival of the last UAV to the release of the UAVs at the next partition will be the greater of the UGV transit time between partitions and the charging time required to replenish an energy difference of $\Delta e$. Therefore, 
Assuming the UGV moves at maximum speed while in motion, the amount of time required for the UAV-UGV team to go from releasing UAVs at partition $\mathcal{P}_i$ to releasing UAVs at the next partition $\mathcal{P}_{i+1}$ is,
\begin{equation}\label{eq:delta_time_partitions}
    \Delta t_i = \frac{\Delta e}{\beta^-}+ \text{max}\left(\frac{\|c_{i+1}-c_i\|_2}{u_{max}^G},\frac{\Delta e}{\beta^+}\right). 
\end{equation}
where $\beta^-$ and $\beta^+$ are the discharging and recharging rates. This result leads to a supercycle period of,
\small
\begin{equation}\label{objective}
\begin{split}
    T_c = |\mathcal{P}|\frac{\Delta e}{\beta^-}+\sum_{i=1}^{|\mathcal{P}|-1} \text{max}\left(\frac{\|c_{i+1}-c_i\|_2}{u_{max}^G},\frac{\Delta e}{\beta^+}\right)+
    \text{max}\left(\frac{\|c_1-c_{|\mathcal{P}|}\|_2}{u_{max}^G},\frac{\Delta e}{\beta^+}\right),
    \end{split}
\end{equation}
\normalsize
where the total number of partitions is given by $|\mathcal{P}| = \ceil[\Big]{\frac{\bar{x}}{a_1}}  \ceil[\Big]{\frac{\bar{y}}{a_2}}$.
%\begin{equation}\label{partsize}
%|\mathcal{P}| = \ceil[\Big]{\frac{\bar{x}}{a_1}}  \ceil[\Big]{\frac{\bar{y}}{a_2}}.
%\end{equation}

% For multiple homogeneous UAV-UGV teams, if the teams are following an identical supercycle, but are equally spaced along the supercycle in time, then the temporal spacing between the teams will be
% \begin{equation}\label{sup_period}
% \bar{T}_c = \frac{T_c}{m}.    
% \end{equation}

% \begin{figure}[h!]
% 	\centering
% 	\includegraphics[scale=0.58]{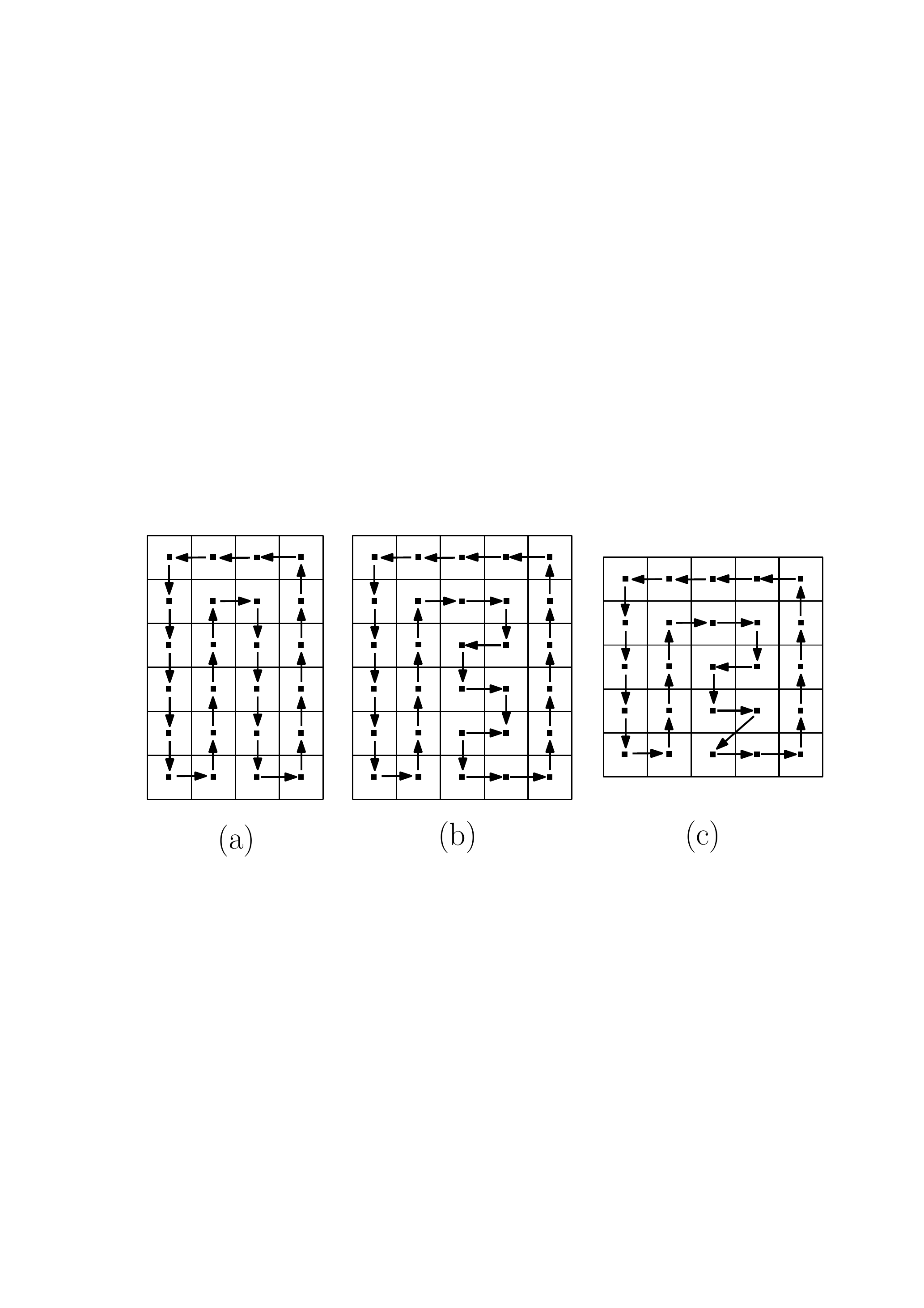}
% 	\caption{Examples of the three supercycle configurations being implemented: (a) Even number of columns, (b) Odd number of columns, even number of rows, and (c) Even number of rows and columns.}
% 	\label{fig:cycle_examples}
% \end{figure}
 
\subsubsection{UAV Trajectory Generation}
\begin{sloppypar}
Within each partition, the trajectories for the UAVs are planned such that all nodes in the partition are visited. In order to evenly distribute node assignments among the UAVs, a partition is further subpartitioned using a quadrant angle referenced to the release point and a reference axis, as depicted in Fig. \ref{fig:subpart}. Let $\{\nu_{i,j}\}_{j=1:|V_i|}$ be the sequence of nodes in partition $\mathcal{P}_i$ sorted according to their quadrant angle. The number of nodes allocated to the subpartition $k$ is $|V_{i,k}|=\ceil[\Big]{\frac{|V_i|-s_k}{\frac{n}{m}-(k-1)}}$, where $s_k = \sum_{j=1}^{k-1} |V_{i,j}|$, and $V_{i,k}\in V_i$ is the set of nodes belonging to subpartition $k$. The node set for subpartition $k$ is 
\begin{equation}\label{subpartition}
 V_{i,k}=\{\nu_{i,j} | s_{k}+1 \leq j \leq s_k+|V_{i,k}| \}\cup \{c_i\}   .
\end{equation}
% \begin{equation}
%     |V_{g,i}| = \ceil[\Bigg]{\frac{|V_g|-s_i}{\bar{n}-(j-1)}},
% \end{equation}
% \begin{equation}
%     s_i = \sum_{j=1}^{i-1} |V_{g,j}|
% \end{equation}
% and the node assignment is,
% \begin{equation}\label{subpartition}
%     V_{g,i} = \{\bar{V}_g(1+s_i),..., \bar{V}_g(|V_{g,i}|+s_i),c_i\}.
% \end{equation}
\end{sloppypar}

\begin{figure}[h!]
	\centering
	\includegraphics[scale=0.8]{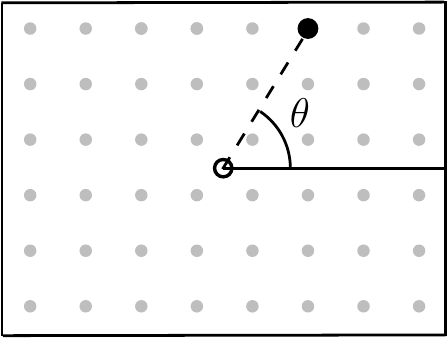}
	\caption{\small Subpartitioning scheme for a rectangular partition. Using the centroid of the partition as the release point, each node is assigned a quadrant angle using the release point and a reference axis as shown.}
	\label{fig:subpart}
\end{figure}

In order to generate a trajectory for a given subpartition, the path of minimum total Euclidean distance passing through all nodes in the subpartition $V_{i,k}$ is determined via the DFJ algorithm. The DFJ algorithm returns a sequence $\{\nu_{i,k,j}\}_{j=1:|V_{i,k}|}$ of nodes in $V_{i,k}$ sorted according to the minimum-distance trajectory.
% For a given partition size $(a_1,a_2)$ and a set of nodes $V_{i,k}$ in subpartition $k$ of partition $\mathcal{P}_i$, let $\overline{\overline{V}}_{g,i}$ be the result of the DFJ algorithm. 
Let $\tau_{j,k}$ be the time it takes a UAV to reach the $j^{th}$ point in the subpartition assigned to UAV $k$ from the center of the partition i.
For the line segment from $\nu_{i,k,j}$ to $\nu_{i,k,j+1}$, assuming the UAV reaches point $\nu_{i,k,j}$ at time $\tau_{j,k}$ and assuming the UAV is moving at maximum speed, the velocity of the UAV will be,
\begin{equation}\label{uav_traj}
    u_k^A(t) = 
    \begin{cases}
      u_{max}^A\frac{\nu_{i,k,j+1}-\nu_{i,k,j}}{\|\nu_{i,k,j+1}-\nu_{i,k,j}\|_2} & \text{if patrolling } \\
      u_k^G(t) & \text{otherwise}.
    \end{cases}
\end{equation}
where the UAV is in the patrolling mode during the times $t \in \bigcup_{\alpha=0}^{\infty}\left(t_i+\alpha T_c+\tau_{j,k} \ \scalebox{2}{,} \ t_i+\alpha T_c+\tau_{j+1,k} \right)$.
%\small
%\begin{equation}\label{uav_time}
%    \forall t \in \bigcup_{\alpha=0}^{\infty}\left(t_i+\alpha T_c+\tau_{j,k} \ \scalebox{2}{,} \ t_i+\alpha T_c+\tau_{j+1,k} \right) .
%\end{equation}
%\normalsize
%Otherwise, $u_k^A(t)$ is equal to the velocity of the UGV. As the trajectories start and end at the first node $\nu_{i,k,1}$, $\nu_{i,k,|V_{i,k}|+1}= \nu_{i,k,1}$.

Let $L_i$ be the the optimal cost of the DFJ algorithm when evaluating the optimal trajectory for UAV $i$. In order to ensure that the UAVs never run out of energy while patrolling, the UAVs must cover the total distance of its trajectory given their energy constraints and maximum speed. Assuming the UAVs move at maximum speed, the distance travelled can be expressed in terms of the energy, $\Delta e_i$ consumed by the UAV $i$, that is $L_i = u_{max}^A\frac{\Delta e_i}{\beta^-}$.
%\begin{equation}
%    L_i = u_{max}^A\frac{\Delta e_i}{\beta^-}
%\end{equation}
Let $\Delta e$ be the maximum amount of energy consumed by a UAV while covering the partition, that is,
\begin{equation}\label{covertime}
    \Delta e = \underset{i}{\text{max}} \left( \Delta e_i \right) = \underset{i}{\text{max}} \frac{L_i \beta^-}{u_{max}^A}.
\end{equation}
Accordingly, the UAVs can cover the partition in one fuel cycle if
\begin{equation}\label{econstraint}
    \underset{i}{\text{max}} \frac{L_i\beta^-}{u_{max}^A} \le \bar{e}.
\end{equation}
By enforcing this energy constraint, the UAVs will be capable of implementing the trajectories generated by the DFJ algorithm, and the supercycle will result in complete safe patrolling of the environment. 
\begin{algorithm}[htb!]\small
	\caption{Partition Feasibility Check}\label{alg_1}
	\textbf{INPUT: } $n,m, \bar{e},u_{max}^A,\beta^-,\bar{x},\bar{y}$ \\
	\textbf{OUTPUT: } $T_c,\Delta e,\{\mathcal{P}_i\}_{i=1:|\mathcal{P}|},\{\nu_{i,k,j}\}_{j=1:|V_{i,k}|} \ \forall k$
	\begin{algorithmic}[1]
	\For{$k=1:\frac{n}{m}$}
	\State \textrm{Subpartition $a_1\times a_2$ nodes via \eqref{subpartition}}
	\State \textrm{Use DFJ algorithm to find $\{\nu_{i,k,j}\}_{j=1:|V_{i,k}|}$}
	\State $\tau \gets 0$
	\For {$j=1:|V_{i,k}|$}
	\State $u_k^A(t) \gets$ \eqref{uav_traj}
	%\State $\tau \gets \tau + \frac{\|\nu_{i,k,j+1}-\nu_{i,k,j}\|_2}{u_{max}^A}$
	\EndFor
	\EndFor
	\If{ \textrm{\eqref{econstraint} is true}}
	\State Use DFJ algorithm to find $\{\mathcal{P}_i\}_{i=1:|\mathcal{P}|}$
	\State $\Delta e \gets$ \eqref{covertime}, $T_c \gets$ \textrm{\eqref{objective}}
	\Else
	\State $T_c \gets$ $\infty$ 
	\EndIf
\end{algorithmic}
\end{algorithm}
\linebreak

\begin{algorithm}\small
	\caption{UAV-UGV Trajectory Generation}\label{alg_5}
	\textbf{INPUT: } $n,m, p^A(0),p^G(0),u_{max}^A,u_{max}^G,\beta^-,\beta^+,\bar{x},\bar{y}$ \\
	\textbf{OUTPUT: } $\mathbf{u}^A,\mathbf{u}^G$
	\begin{algorithmic}[1]
	\State For all combinations of $a_1\le \bar{x}$ and $a_2 \le \bar{y}$,
    Run Alg. \ref{alg_1} 
	$T_c(a_1,a_2) \gets T_c$    
	\State $(a_1,a_2) \gets \underset{a_1,a_2}{argmin} (T_c) $
	\State Run Alg. \ref{alg_4}
	\State Run Alg. \ref{alg_1}
    \State $u^G(t)=0 \ \forall t$ 
    \State $t_i \gets 0$
    \For {$i=1:|\mathcal{P}|$}
    \For {$k=1:\frac{n}{m}$}
    \State $\tau \gets t_i$
    \For {$j=1:|V_{i,k}|$}
	\State $u_k^A(t) \gets$ \eqref{uav_traj}
	\State $\tau \gets \tau + \frac{\|\nu_{i,k,j+1}-\nu_{i,k,j}\|_2}{u_{max}^A}$
	\EndFor
	\EndFor
    \For {$t \in (t_i+\frac{\Delta e}{\beta^-}, t_i+\frac{\Delta e}{\beta^-}+\frac{\|c_{i+1}-c_i\|_2}{u_{max}^G})$}
    \State $u^G(t) \gets $ \eqref{ugv_traj}
    \EndFor
    \State $t_i \gets t_i + \Delta t_i$ from \eqref{eq:delta_time_partitions}
    \EndFor
\end{algorithmic}
\end{algorithm}

%\linebreak

\begin{lemma}\label{lemma1}
There will always exist at least one node that belongs to only one partition if Alg. \ref{alg_4} is used to generate the partitions of a rectangular environment.
\end{lemma}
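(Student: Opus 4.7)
The plan is to exhibit a single node that Alg.~\ref{alg_4} cannot assign to more than one partition. The candidate is the bottom-left cell $v^\star\in V$ whose center is $(d/2, d/2, \bar{z})$; rescaling so that $d = 1$, write this as $(1/2, 1/2)$. The intuition is that all overlap between partitions in Alg.~\ref{alg_4} is concentrated near the far corner $(x,y) = (\bar{x}, \bar{y})$ of $\mathcal{Q}_{\bar{z}}$, which is the opposite side of the environment from $v^\star$.

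I would first make two structural observations about Alg.~\ref{alg_4}. (i) Every partition produced on lines 8--11, 12--15, or 16--17 explicitly requires $x > \bar{x} - a_1$ or $y > \bar{y} - a_2$. Since $\bar{x},\bar{y},a_1,a_2$ are positive integers with $a_1 \le \bar{x}$ and $a_2 \le \bar{y}$, the inequality $1/2 > \bar{x} - a_1$ forces $\bar{x} = a_1$, and analogously $1/2 > \bar{y} - a_2$ forces $\bar{y} = a_2$. (ii) The outer $k_1$-loops on lines 2 and 12 have range $1:\lceil \bar{x}/a_1\rceil - 1$, which is empty whenever $\bar{x} = a_1$; symmetrically, the $k_2$-loop on line 3 and the outer loop on line 8 are empty whenever $\bar{y} = a_2$.

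The argument then splits into four cases. If $\bar{x} > a_1$ and $\bar{y} > a_2$, lines 2--7 create the partition indexed by $(k_1,k_2) = (1,1)$ covering $(0,a_1)\times (0,a_2)$, which contains $v^\star$; by (i), no other partition does. If $\bar{x} = a_1$ and $\bar{y} > a_2$, observation (ii) kills lines 2--7 and 12--15 entirely, so only lines 8--11 and 16--17 produce partitions; the $k = 1$ partition from lines 8--11 contains $v^\star$, while lines 16--17 do not since $\bar{y} > a_2$ rules out $v^\star$ via (i). The case $\bar{x} > a_1, \bar{y} = a_2$ is symmetric. Finally, if $\bar{x} = a_1$ and $\bar{y} = a_2$, only lines 16--17 fire and $|\mathcal{P}| = 1$, so the claim is trivial. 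In every case $v^\star$ lies in exactly one partition, which suffices.

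The main obstacle is the integer bookkeeping around cell centers versus partition boundaries: cell centers sit at half-integer positions while partition boundaries are at integer multiples of $a_1$ or $a_2$, and reductions such as $1/2 > \bar{x} - a_1 \Leftrightarrow \bar{x} = a_1$ rely essentially on integrality of all four parameters. Once this arithmetic is pinned down, each of the four cases reduces to a one-line verification against the explicit inequalities defining the partitions.
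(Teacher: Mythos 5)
Your proof is correct and takes essentially the same route as the paper: both exhibit the lower-left corner node as the witness and run the same four-way case analysis on whether $\bar{x}=a_1$ and/or $\bar{y}=a_2$, checking in each case which blocks of Alg.~\ref{alg_4} can produce a partition containing that node. Your version merely makes explicit the half-integer-center versus integer-boundary arithmetic that the paper leaves implicit.
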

\begin{proof}
    Consider the node $v_1$ in the lower left corner of the first partition $\mathcal{P}_1$ at the lower left corner of the environment. There are four possible cases for this partition.
    \begin{itemize}
        \item \textbf{$\bar{x}>a_1$ and $\bar{y}>a_2$: } In this case, $v_1$ does not lie in any other partition besides $\mathcal{P}_1$.
        \item \textbf{$\bar{x}=a_1$ and $\bar{y}>a_2$: } In Alg. \ref{alg_4}, if  $\bar{x}=a_1$, then the lines $3-7$ and $12-15$ will not be executed. In this case, $\mathcal{P}_1$ will be uniquely defined by the lines $8-11$, and therefore $v_1$ will only belong to one partition.
        \item \textbf{$\bar{x}>a_1$ and $\bar{y}=a_2$: } In this case, the lines $3-7$ and $8-11$ will be skipped. Now $\mathcal{P}_1$ is defined in the lines $12-15$, hence $v_1$ belongs to only one partition.
        \item \textbf{$\bar{x}=a_1$ and $\bar{y}=a_2$: } Now all of the for loops from lines $3-15$ will be skipped, and there is only one partition defined in lines $16-17$, which is $\mathcal{P}_1$.
    \end{itemize}
    \vspace{-5mm}
\end{proof}

\begin{proposition}
If a single UAV-UGV team follows the same supercycle over a set of equal rectangular partitions according to Alg. \ref{alg_1}, then the maximum age after a period $T_c$ will be the period of the supercycle, i.e.,
\begin{equation}
    \bar{T}(t) = T_c, \quad \forall t \ge T_c.
\end{equation}
\end{proposition}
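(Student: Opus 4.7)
The plan is to exploit the explicit periodicity of the supercycle construction. First I would observe that for any $t \ge T_c$, both the UGV input in equation (14) and the UAV inputs in equation (18) are periodic with period $T_c$, since they are defined as piecewise-constant functions over time intervals that are themselves periodic in $T_c$ (each $t^{out}_{i,j}$, $t^{in}_{i+1,j}$, and each patrolling window is shifted by $T_c$ across supercycle iterations). Consequently, for every node $v\in V$, the set of times at which a UAV coincides with $v$ is periodic with period $T_c$ beyond $t = T_c$, and so are $\mathcal{A}(v,\mathbf{p}^A)$ and $\mathcal{D}(v,\mathbf{p}^A)$ restricted to $[T_c,\infty)$.

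Next I would establish the upper bound $\bar T(t) \le T_c$. Since the partition set $\mathcal{P}$ satisfies $\cup_i V_i = V$, every node $v$ lies in at least one partition $\mathcal{P}_i$, and by the definition of a supercycle each $\mathcal{P}_i$ is patrolled exactly once per period $T_c$. Thus every node is visited at least once in every window of length $T_c$, and for any departure $t_d \in \mathcal{D}(v,\mathbf{p}^A)$ with $t_d \ge t$ the subsequent arrival $t_\alpha(v,t_d)$ satisfies $t_\alpha(v,t_d) - t_d \le T_c$.

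For the matching lower bound $\bar T(t) \ge T_c$, I would invoke Lemma \ref{lemma1} to pick a node $v^\star$ that belongs to exactly one partition, say $\mathcal{P}_{i^\star}$, and is not the release point $c_{i^\star}$ (which is possible because $v^\star$ is a corner node of the first partition, distinct from its centroid). Because $v^\star$ is traversed by its assigned UAV as an interior point of a straight-line segment at maximum speed in equation (18), the arrival and departure times at $v^\star$ coincide, and by the periodicity argument above, two consecutive visits to $v^\star$ are separated by exactly $T_c$. Therefore there exists a departure $t_d \ge t$ at $v^\star$ with $t_\alpha(v^\star,t_d) - t_d = T_c$, so $\bar T(t) \ge T_c$. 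Combining the two bounds yields the claim.

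The main obstacle I anticipate is the lower-bound argument, specifically justifying that a single isolated node $v^\star$ is traversed instantaneously so that the gap is \emph{exactly} $T_c$ rather than something slightly smaller due to finite occupancy; this requires pointing to the trajectory rule in equation (18) which moves the UAV at $u_{\max}^A$ between consecutive nodes of the DFJ tour, together with the existence guarantee of Lemma \ref{lemma1} to rule out pathological cases where overlapping partitions reduce every node's period below $T_c$.
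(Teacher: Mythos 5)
Your proposal is correct and follows essentially the same route as the paper: both rest on Lemma \ref{lemma1} to produce a node lying in exactly one partition, combine this with the once-per-fuel-cycle DFJ tours and the once-per-supercycle visitation of partitions to get a node whose consecutive visits are exactly $T_c$ apart, and conclude $\bar{T}(t)=T_c$. Your version is somewhat more careful than the paper's --- it makes the periodicity of the inputs, the upper bound $\bar{T}(t)\le T_c$, and the exclusion of the release point $c_{i^\star}$ explicit, all of which the paper leaves implicit --- but the underlying argument is the same.
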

\begin{proof}
In light of Lemma \ref{lemma1}, there will always be at least one node $v\in V$ that will only be contained in a single partition. The DFJ algorithm in Alg. \ref{alg_1} guarantees that trajectories in $\mathcal{P}_i$ over a single fuel cycle will visit every node in $\mathcal{P}_i$ only once when $\mathcal{P}_i$ is visited by the UAV-UGV team. Finally, a UAV-UGV team following a supercycle over the partitions will visit each partition only once. These three results imply that there exists at least one node that is visited only once per supercycle. Since the period of the supercycle is $T_c$, the time between visits to such a node will also be $T_c$. Therefore, the maximum age will be $T_c$ for all times $t\ge T_c$.
\end{proof}

Considering that $T_c$ will be equivalent to the maximum age in the limit $t\to \infty$, we formulate an optimization problem that will be equivalent to \eqref{eq:problem}. The choice of $(a_1,a_2)$ defines $\mathcal{P}$, and therefore $T_c$. The goal is to minimize $T_c$ over all pairs $(a_1,a_2)$ subject to the energy limitations of the UAVs,
\begin{align}\label{eq:problem2}
    \min_{a_1,a_2} & \quad {T_c}(a_1,a_2), \\
    s.t. & \quad \eqref{econstraint}.
\end{align}

\begin{proposition}\label{prop1}
For any $\frac{n}{m}$ and $(a_1,a_2)$, if Alg. \ref{alg_1} returns UAV trajectories, they start and end at the centroid of the partition, visit all nodes in the partition once, and have lengths that can be traversed with an energy less than or equal to $\bar{e}$. 

% For a subpartitioning scheme of an odd-dimensioned square partition wherein the boundaries are defined by lines connecting the center of the partition to the outermost layer of the partition, a trajectory $\pi$ generated by Alg. \ref{alg_2} always exists for which $f(a)$ in Equation \ref{trajbound} is a valid upper bound on $|\pi_i|$.
\end{proposition}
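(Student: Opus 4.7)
The three claims in the proposition correspond to three distinct ingredients of Alg.~\ref{alg_1}: the subpartition construction in \eqref{subpartition}, the DFJ subroutine applied to each subpartition, and the explicit energy test in \eqref{econstraint}. The plan is to unpack each claim by pointing to the corresponding ingredient; no clever argument is needed, only careful bookkeeping.

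For the claim that each UAV trajectory starts and ends at the centroid, the key observation is that $c_i$ is explicitly appended to every subpartition node set $V_{i,k}$ in \eqref{subpartition}. The DFJ subroutine produces a closed minimum-distance tour through $V_{i,k}$, and because that tour is cyclic and passes through $c_i$, the UAV trajectory associated with subpartition $k$ can be parameterized so that its initial and final positions both coincide with $c_i$. For the claim that the team visits every node in $V_i$, I would first verify that the contiguous index blocks $\{s_k+1,\dots,s_k+|V_{i,k}|\}$ used in \eqref{subpartition} tile $\{1,\dots,|V_i|\}$ exactly once. A short induction on $k$ using the recurrence $s_{k+1}=s_k+|V_{i,k}|$ and the fact that the denominator in \eqref{subpartition} equals $1$ at $k=n/m$ (forcing $|V_{i,n/m}|=|V_i|-s_{n/m}$) suffices. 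Thus $\bigcup_k (V_{i,k}\setminus\{c_i\})=V_i\setminus\{c_i\}$ as a disjoint union, and each non-centroid node lies in exactly one UAV's tour, so the $n/m$ UAV trajectories collectively cover $V_i$.

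For the energy claim, I would appeal to the discharge branch of \eqref{energy}: while airborne a UAV drains energy at the constant rate $\beta^-$, so a trajectory of length $L_k$ flown at speed $u_{max}^A$ takes time $L_k/u_{max}^A$ and consumes exactly $\Delta e_k=L_k\beta^-/u_{max}^A$. Alg.~\ref{alg_1} emits trajectories only when the gate \eqref{econstraint} is satisfied, i.e., when $\max_k L_k\beta^-/u_{max}^A\le \bar{e}$, which is precisely the statement that every returned trajectory requires at most $\bar{e}$ units of energy.

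The only step that is not completely immediate from the definitions is the tiling verification for the contiguous index blocks in \eqref{subpartition}, where the ceiling function forces a brief inductive argument; everything else is essentially a translation of algorithmic lines into mathematical claims, so I do not expect a genuine obstacle.
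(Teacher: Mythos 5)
Your proposal is correct and follows essentially the same route as the paper's own (much terser) proof: both attribute the closed tours through $c_i$ and the coverage of all nodes to the DFJ subroutine on the sets $V_{i,k}$ from \eqref{subpartition}, and the energy bound to the gate \eqref{econstraint}. The only addition is your explicit verification that the contiguous index blocks tile $\{1,\dots,|V_i|\}$, which the paper leaves implicit; that is extra rigor rather than a different argument.
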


\begin{proof}
    The DFJ algorithm in line $3$ of Alg. \ref{alg_1} guarantees that the trajectories collectively pass through all nodes in the partition, and begin and end at the release point. Line $10$ of Alg. \ref{alg_1} guarantees that the UAVs will return to the release point before running out of energy.
\end{proof}

\subsubsection{Deployment Protocol}
We consider homogeneous UAV-UGV teams that are deployed at uniformly spaced intervals along a given supercycle.
In line 1 of Alg. \ref{alg_3}, all UAV-UGV teams are deployed initially to $c_1$. However, the UAVs are not released from UGV $i$ until time $(i-1)T_c/m$, at which point the supercycle is initiated. Therefore, teams $i=1,2,...m$ will begin the supercycle at times $t=\{0,T_c/m,...,(m-1)T_c/m\}$. 

\begin{algorithm}[H]\small
	\caption{Deployment Protocol}\label{alg_3}
	\textbf{INPUT: } $n,m, p(0),q(0),u_{max}^A,u_{max}^G,\beta^-,\beta^+,\bar{x},\bar{y}$
	\begin{algorithmic}[1]
	\State Run Alg. \ref{alg_5}
	\State \textrm{Deploy all UGVs to  $c_1$}.
	\State \textrm{$t \gets 0$}.
	\State \textrm{$u_1^G(t) \gets$ \eqref{ugv_traj}}
	\State \textrm{$u_{1,k}^A(t) \gets$ \eqref{uav_traj} $\ \forall  k=1,...,n/m$}
	\State \textrm{$u^G_j(t) \gets u^G_1(t-(j-1)\frac{T_c}{m}) \quad \forall j=2,...,m$}
	\State \textrm{$u^A_{j,k}(t) \gets u^A_{1,k} (t-(j-1)\frac{T_c}{m} ) \quad \forall j=2,...,m, \ \forall k=1,...,\frac{n}{m}$}
\end{algorithmic}
\end{algorithm}

\begin{proposition}
Given a supercycle of period $T_c$ over a set of partitions $\mathcal{P}$, if $m$ UAV-UGV teams follow the deployment protocol of Alg. \ref{alg_3}, the maximum age over all nodes $v\in V$ starting after a time $T_c$ has passed will be $\frac{T_c}{m}$
\begin{equation}
    \bar{T}(t) = \frac{T_c}{m}, \quad \forall t \ge T_c.
\end{equation}
\end{proposition}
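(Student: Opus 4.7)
The plan is to lift the previous proposition (single-team max age equals $T_c$) to the $m$-team case by exploiting the fact that Alg.~\ref{alg_3} produces $m$ trajectories that are identical up to uniform time shifts of length $T_c/m$. First, I would fix an arbitrary node $v \in V$ and consider its visit-time set under team $1$, which I will call $\mathcal{T}_v^{(1)} \subseteq \mathbb{R}_+$. By the previous proposition, after the initial transient $t \ge T_c$, team $1$ visits $v$ at least once in every window of length $T_c$, so $\mathcal{T}_v^{(1)}$ is nonempty and periodic with period $T_c$ (since each supercycle is a verbatim repetition).

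Next, I would invoke lines 6--7 of Alg.~\ref{alg_3}, which define $u^G_j(t)=u^G_1(t-(j-1)T_c/m)$ and $u^A_{j,k}(t)=u^A_{1,k}(t-(j-1)T_c/m)$ with the same initial position $c_1$. Because the dynamics \eqref{UAVdynamics}--\eqref{UGVdynamics} are time-invariant single integrators and the control histories are rigidly shifted, the resulting trajectories satisfy $p^A_{j,k}(t) = p^A_{1,k}(t-(j-1)T_c/m)$ for $t\ge(j-1)T_c/m$. Consequently, the visit-time set of team $j$ is
\begin{equation}
    \mathcal{T}_v^{(j)} = \mathcal{T}_v^{(1)} + (j-1)\tfrac{T_c}{m}.
\end{equation}
Taking the union $\mathcal{T}_v = \bigcup_{j=1}^{m} \mathcal{T}_v^{(j)}$ gives the complete set of times at which $v$ is visited by some UAV.

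I would then argue the age bound from both sides. For the upper bound, pick any $t_d \in \mathcal{D}(v,\mathbf{p}^A)$ with $t_d \ge T_c$ and let $\tau \in \mathcal{T}_v^{(1)}$ be the first team-$1$ visit at or after $t_d$, so $\tau - t_d \le T_c$ by the previous proposition. Among the $m$ shifted copies $\{\tau - (j-1)T_c/m\}_{j=1}^{m}$ modulo $T_c$, at least one lies in the interval $[t_d, t_d + T_c/m]$, showing that some team visits $v$ within $T_c/m$ of $t_d$ and hence $\bar{T}(t) \le T_c/m$. For tightness, Lemma~\ref{lemma1} supplies a node $v^\star$ contained in only one partition; under the single-team supercycle it is visited exactly once per period $T_c$, so $\mathcal{T}_{v^\star}^{(1)}$ consists of a single point $t^\star$ modulo $T_c$, and the $m$ shifts spread its visits to exactly $\{t^\star + (j-1)T_c/m\}$ modulo $T_c$, i.e., perfectly spaced with gap $T_c/m$, so the maximum age is realized.

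The main obstacle I anticipate is the bookkeeping of the transient: the time-shift identity for $p^A_{j,k}$ only holds for $t\ge (j-1)T_c/m$, since team $j$ sits at $c_1$ waiting to launch before that. Restricting attention to $t \ge T_c \ge (m-1)T_c/m$ (and noting that ages are initialized to zero) cleanly sidesteps this issue, which is precisely why the conclusion is stated for $t \ge T_c$. A minor secondary point to address is that nodes shared between partitions may be visited more than once per supercycle by a single team; for such nodes the union $\mathcal{T}_v$ is only denser, so the upper-bound argument still yields $\bar{T}(t) \le T_c/m$, and $v^\star$ from Lemma~\ref{lemma1} certifies the bound cannot be improved.
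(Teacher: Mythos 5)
Your proposal is correct and follows essentially the same route as the paper: exploit the $T_c/m$ time-shift symmetry of the $m$ identical teams to conclude that each node's visit times are the union of $m$ uniformly shifted copies of team 1's visit times, use the Lemma~\ref{lemma1} node (visited exactly once per supercycle) to show the $T_c/m$ gap is attained, and note that nodes visited more often only have smaller age. Your version is somewhat more careful than the paper's (explicit visit-time sets, the modulo-$T_c$ spacing argument, and the transient for $t < (j-1)T_c/m$), but the underlying argument is the same.
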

\begin{proof}
The teams are identical, follow the same deterministic dynamics and start with the same initial condition. Therefore, when the teams are released from the same location with $T_c/m$ time difference as in Alg. \ref{alg_3}, that time difference will always hold. More specifically, for a given team $k$ with UAV $j$, where $j \in \{1,2, \hdots, n/m\}$  denotes the subpartition assigned to the UAV,
\begin{equation}
    p^G_k (t)= p^G_{k+1}(t+T_c/m), 
\end{equation}
\begin{equation}
p^A_{k,j}(t)= p^A_{k+1,j}(t+T_c/m),
\end{equation}
\begin{equation}
e_{k,j}(t)= e_{k+1,j}(t+T_c/m), 
\end{equation}
for all $t\ge 0$. 
Consider a node $v\in V$ that is only visited once per supercycle. The existence of such a node is proven in Lemma \ref{lemma1}. Suppose that $v$ is visited at time $\tau$ during the supercycle that begins at time $t=0$. Given that all teams $i=1,...,m$ begin at $c_1$, and initiate the same supercycle, then the node $v$ will be visited by teams $i=1,2,...,m$ at times $t=\{\tau,\tau+\frac{T_c}{m},...,\tau+(m-1)\frac{T_c}{m}\}$. The spacing of all of these times is $\frac{T_c}{m}$, which will be the 
maximum age for any node $v\in V$ visited only once per supercycle.
Given the partitioning Alg. \ref{alg_4} and the trajectory generation Alg. \ref{alg_5}, all nodes will be visited at least once per supercycle. For all nodes visited more than once per supercycle, the maximum age will be lower. Therefore, the maximum age over all nodes after a time $T_c$ has passed will be $\frac{T_c}{m}$. \end{proof}

\

\vspace{-7mm}
\section{Simulation Results}
Consider $3$ UGVs and $15$ UAVs that can form homogeneous UAV-UGV team of $1$ UGV and $5$ UAVs. Suppose that each UAV has a maximum energy of $\bar{e}=100$ and a detection footprint of $d=33$. The environment is assumed to have dimensions $x_{max}=1584$ and $y_{max}=1056$, which imply $\bar{x}=48$ and $\bar{y}=32$, as depicted in Fig. \ref{fig:simulation}. We assume the UGV transport rate $u_{max}^G=5$, and the charging and depletion rates for the UAVs are $\beta^+=\beta^-=0.5$.

\begin{figure}[h]
    \centering
    \includegraphics[scale=0.23]{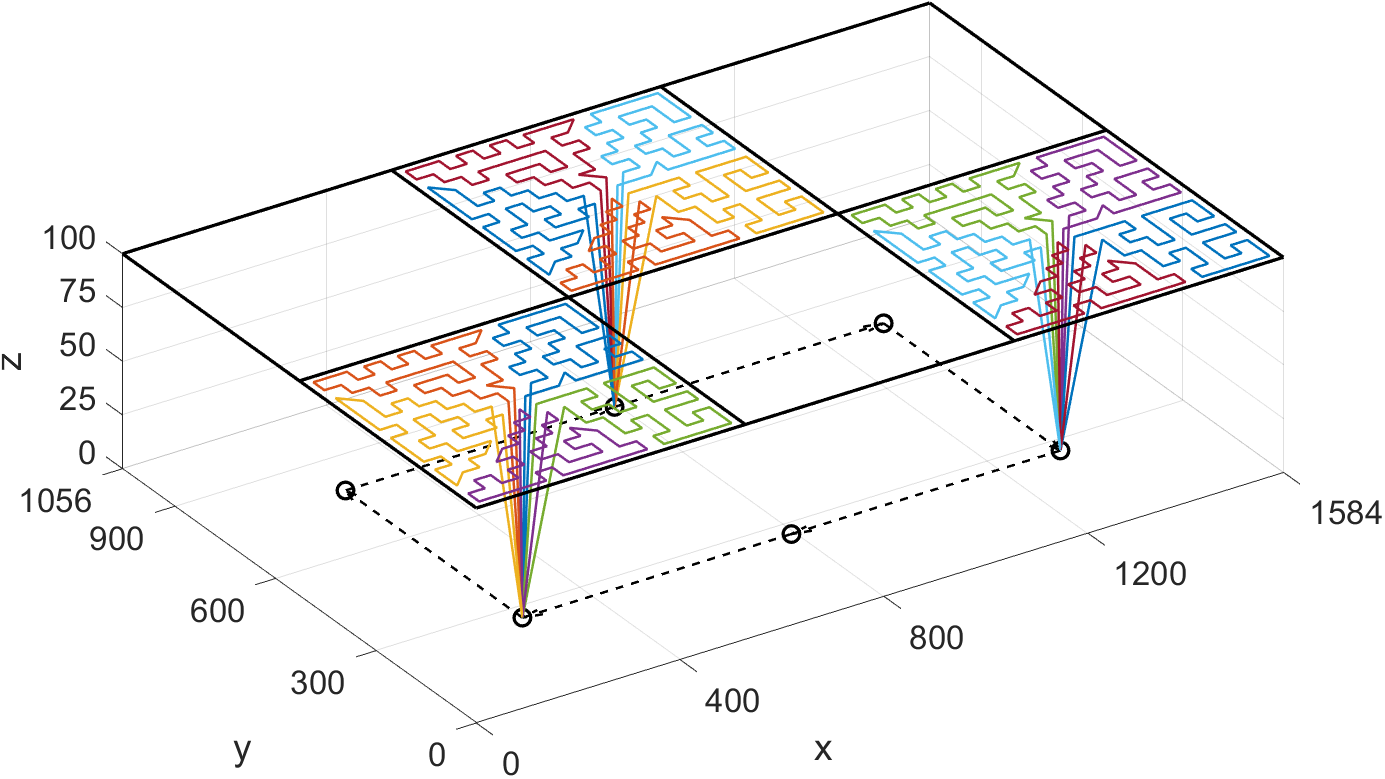}
    \caption{\small Trajectories for three homogeneous teams over a single fuel cycle. The UAV-UGV teams are positioned across a supercycle with equal temporal spacing.}
    \label{fig:simulation}
\end{figure}

Based on Alg. \ref{alg_5}, the optimal partition dimensions are $(a_1,a_2)=(16,16)$, implying that $|\mathcal{P}|=6$.%, per \eqref{partsize}. 
The corresponding value of $\Delta e$ is $96.07$ per \eqref{covertime}. Alg. \ref{alg_5} results in a supercycle period of $T_c=2306$. The resulting trajectories are depicted in Fig. \ref{fig:simulation}, wherein the three UAV-UGV teams are positioned on the supercycle with equal temporal spacing, resulting in a minimum long-term maximum age of $T_c/m=768.6$. 
% As expected, the inequalities of Theorem \ref{theorem2} are satisfied. 
% Considering that we set the UGV mobility at $u=1$ and the charging rate at $\beta=1$, we can see from Equation \ref{multicycle} that the UGV transport time played no role in the performance. 

%\begin{algorithm}
%	\SetAlgoLined
%	\caption{Optimal Path Solver (UGV $j$)}
%	\textbf{Input Parameters: } $\bar{n}, \mathcal{G},V_p, \bar{e}$

%\end{algorithm}

\section{Conclusion}
In this paper, a strategy for implementing persistent surveillance over a rectangular environment was presented for multiple energy-constrained UAVs supported by multiple mobile charging stations. This strategy reduces the problem of optimizing $n+m$ trajectories under energy constraints to the simpler problem of optimizing a supercycle of partitions for a single team of $\frac{n}{m}$ UAVs and one UGV. We showed that the UAVs will always be safe (not running out of energy) and the proposed strategy will result in a minimum long-term maximum age based on the length of the supercycle and the number of teams. As a future work, we plan to extend this study by considering heterogeneous teams and increasing the complexity of the environment. %In future works, a fundamental performance limit could be developed to gauge the performance of this method, and provide a benchmark for strategies for the energy constrained multi-UAV persistent surveillance problem.

\bibliography{ref}

\end{document}